\newtheorem{theorem}{Theorem}
\newtheorem{definition}{Definition}
\newtheorem{example}{Example}
\newtheorem{proposition}{Proposition}
\newtheorem{remark}{Remark}
\begin{document}

\title{An Information-theoretic Approach to Privacy}
\author{Lalitha Sankar, S. Raj Rajagopalan, and H. Vincent Poor \thanks{This research
is supported in part by the U. S. National Science Foundation under Grant
CCF-1016671, the Air Force Office of Scientific\ Research under Grant
FA9550-09-1-0643, and by a fellowship from the Princeton University Council
on\ Science and Technology. }\thanks{L. Sankar and H. V. Poor are with the
Department of Electrical Engineering, Princeton University, Princeton, NJ,
USA. S. Raj Rajagopalan is with HP\ Labs, Princeton, NJ, USA.
\{lalitha,poor@princeton.edu,raj.rajagopalan@hp.com\texttt{{\small \}}}}}
\maketitle

\begin{abstract}
Ensuring the usefulness of electronic data sources while providing necessary
privacy guarantees is an important unsolved problem. This problem drives the
need for an overarching analytical framework that can quantify the safety of
personally identifiable information (privacy) while still providing a
quantifable benefit (utility) to multiple legitimate information consumers.
State of the art approaches have predominantly focused on privacy. This paper
presents the first information-theoretic approach that promises an analytical
model guaranteeing tight bounds of how much utility is possible for a given
level of privacy and vice-versa.

\end{abstract}

\thispagestyle{empty} \pagestyle{empty}

%%%%%%%%%%%%%%%%%%%%%%%%%%%%%%%%%%%%%%%%%%%%%%%%%%%%%%%%%%%%%%%%%%%%%%%%%%%%%%%%

%%%%%%%%%%%%%%%%%%%%%%%%%%%%%%%%%%%%%%%%%%%%%%%%%%%%%%%%%%%%%%%%%%%%%%%%%%%%%%%%

\section{The Database Privacy\ Problem}

Information technology and electronic communications have been rapidly applied
to almost every sphere of human activity, including commerce, medicine and
social networking. The concomitant emergence of myriad large centralized
searchable data repositories has made \textquotedblleft
leakage\textquotedblright\ of private information such as medical data, credit
card information, or social security numbers via data correlation
(inadvertently or by malicious design) highly probable and thus an important
and urgent societal problem. Unlike the well-studied secrecy problem (e.g.,
\cite{RSA,CsisNar1,Wyner}) in which the protocols or primitives make a sharp
distinction between secret and non-secret data, in the \emph{privacy} problem,
disclosing data provides informational utility while enabling possible loss of
privacy at the same time. In fact, in the course of a legitimate transaction,
a user learns some public information, which is allowed and needs to be
supported for the transaction to be meaningful, but at the same time he can
also learn/infer private information, which needs to be prevented. Thus every
user is (potentially)\ also an adversary. This drives the need for a unified
analytical framework that can tell us unequivocally and precisely how safe
private data can be (privacy) and simultaneously provide measurable benefit
(utility) to multiple legitimate information consumers.

It has been noted that utility and privacy are competing goals:
\textit{perfect privacy can be achieved by publishing nothing at all, but this
has no utility; perfect utility can be obtained by publishing the data exactly
as received, but this offers no privacy }\cite{Chawla01}. Utility of a data
source is potentially (but not necessarily) degraded when it is restricted or
modified to uphold privacy requirements. The central problem of this paper is
a precise quantification of the tradeoff between the privacy needs of the
\textit{respondents} (individuals represented by the data)\ and the utility of
the \textit{sanitized} (published) data for any data source.

Though the problem of privacy and information leakage has been studied for
several decades by multiple research communities (e.g.,
\cite{Chawla01,Adam_Wort,Dalenius,Sweeney,Ag_Ag} and the references therein),
the proposed solutions have been both heuristic and application-specific. The
recent groundbreaking theory of $\epsilon$-differential privacy
\cite{Dwork_DP} from the theoretical computer science community provides the
first universal metric of privacy that applies to any numerical database. We
seek to address the open question of a universal and analytical
characterization that provides a tight privacy-utility tradeoff using tools
and techniques from information theory.

Rate distortion theory is a natural choice to study the utility-privacy
tradeoff; utility can be quantified via fidelity which, in turn, is related to
\textit{distortion}, and privacy can be quantified via \textit{equivocation}.
Our key insight is captured in the following theorem which is presented in
this paper: for a data source with private and public data, \textit{minimizing
the information disclosure rate }sufficiently to satisfy the desired utility
for the public data \textit{is equivalent to maximizing the privacy} for the
private data. In a sparsely referenced paper \cite{Yamamoto} from three
decades ago, Yamamoto developed the tradeoff between rate, distortion, and
equivocation for a specific and simple source model. In this paper, we show
via the above summarized theorem that Yamamoto's formalism can be translated
into the language of data disclosure. Furthermore, we develop a framework that
allows us to model data sources, specifically databases, develop application
independent utility and privacy metrics, quantify the fundamental bounds on
the utility-privacy tradeoffs, and develop a side-information model for
dealing with questions of external knowledge.

The paper is organized as follows. We present channel model and preliminaries
in Section \ref{Sec_II}. The main result and the proof are developed in
Section \ref{Sec_III}. We discuss the results and present numerical examples
in\ Section \ref{Sec_IV}. We conclude in Section \ref{Sec_V}.

\section{\label{Sec_II}The Database Privacy Problem}

\subsection{Problem Definition}

While the problem of quantifying the utility/privacy problem applies to all
types of data sources, we start our study with databases because they are
highly structured and historically better studied than other types of sources.
A database is a table (matrix) whose rows represent individual entries and
whose columns represent the \emph{attributes} of each entry \cite{Adam_Wort}.
For example, the attributes of each entry in a healthcare database typically
include name, address, social security number (SSN), gender, and a collection
of medical information, and each entry contains the information pertaining to
an individual. Messages from a \emph{user} to a database are called
\emph{queries} and, in general, result in some numeric or non-numeric
information from the database termed the \emph{response}.

The goal of privacy protection is to ensure that, to the extent possible, the
user's knowledge is not increased beyond strict predefined limits by
interacting with the database. The goal of utility provision is, generally, to
maximize the amount of information that the user can receive. Depending on the
relationships between attributes, and the distribution of the actual data, a
response may contain information that can be inferred beyond what is
explicitly included in the response. The privacy policy defines the
information that should not be revealed explicitly or by inference to the user
and depends on the context and the application. For example, in a database on
health statistics, attributes such as name and SSN may be considered private
data, whereas in a state motor vehicles\ database only the SSN\ is considered
private. The challenge for privacy protection is to design databases such that
responses do not reveal information contravening the privacy policy.

\subsection{Current Approaches and Metrics}

The problem of privacy in databases has a long and rich history stretching
back to the 1970s and space restrictions preclude any attempt to do full
justice to the different approaches that have been considered along the way.
While there have been many heuristic approaches to privacy, we only present
the major milestones in privacy research on creating quantitative privacy
metrics. Since privacy is a requirement that appears in many diverse contexts,
a robust and formal notion of privacy that satisifies most, if not all,
requirements is a tricky proposition and there have been many attempts at a
definition. The reader is referred to the excellent survey by Dwork
\cite{Dwork_ACM} for a detailed history of the field. The problem of privacy
was first exposed by census statisticians who were required to publish
statistics related to census functions but without revealing any particulars
of individuals in the census databases. An early work by Dalenius
\cite{Dalenius} reveals the depth to which this problem was considered.
Several early attempts were made to publish census data using ad hoc
techniques such as sub-sampling. However, the first widely reported attempt at
a formal definition of privacy was by Sweeney \cite{Sweeney}. The concept of
$k$-\textit{anonymity} proposed by Sweeney captures the intuitive notion of
privacy that every individual entry should be indistinguishable from $(k-1)$
other entries for some large value of $k$. This notion of anonymity for
database respondents is analogous to similar proposals that were made for
anonymity on the Internet such as crowds \cite{AviRubin_Crowds}. More
recently, researchers in the data mining community have proposed to quantify
the privacy loss resulting from data disclosure as the mutual information
between attribute values in the original and perturbed data sets, both modeled
as random variables \cite{Ag_Ag}.

The approaches considered in the literature have centered on the correct
application of \emph{perturbation} (also called \textit{sanitization}), which
encompasses a general class of database modification techniques that ensure
that a user interacts only with a modified database that is derived from the
original (e.g.: \cite{Dalenius,Sweeney,Ag_Ag,Chawla01}). Most of the these
perturbation approaches, with the exception of differential privacy-based
ones, are heuristic and application-specific and often focus on additive noise approaches.

\textit{Differential privacy: }More recently, privacy approaches for
statistical databases has been driven by the differential privacy definition
\cite{Dwork_DP,DS_DP,Dwork_DP_Survey,Dwo_NoiSen}. In these papers, the authors
take the view that privacy of an individual in a database is related to the
ability of an adversary to detect whether that individual's data is in that
database or not. Motivated by cryptographic models, they formalize this
intuition by defining the difference in the adversary's outputs when presented
with two databases $D$ and $D^{^{\prime}}$that are identical except in one row.

\begin{definition}
[\cite{Dwork_ACM}]\label{Def_DP}A function $\mathcal{K}$ gives $\epsilon
$-differential privacy if for all databases $D,D^{^{\prime}}$ defined as
above, and all $S\subseteq$ Range($\mathcal{K}$),
\begin{equation}
\Pr[\mathcal{K}(D)\in S]\leq exp(\epsilon)\cdot\Pr\left[  \mathcal{K}\left(
D^{^{\prime}}\right)  \in S\right]
\end{equation}
where the probability space in each case is over the coin flips of
$\mathcal{K}$.
\end{definition}

It is important to make two observations regarding the above definition.
First, the probabilities in definition \ref{Def_DP} are over the actions of
the function $\mathcal{K}$ and not over the distribution of $D$; in other
words, the definition is independent of the distribution from which $D$ may be
sampled. Second, Definition \ref{Def_DP} guarantees that the presence or
absence of an individual row in the database makes very little difference to
the output of the adversary as required, and thus, provides a precise privacy
guarantee to any individual in the database.

More recently, Dwork \textit{et al}. \cite{Dwo_NoiSen} also provide a
mechanism for achieving $\epsilon$-differential privacy universally for
statistical queries (queries that map subsets of database entries to real
numbers) which we summarize below. Let $Z\sim Lap(b)$ represent a Laplacian
distributed random variable with parameter $b.$ If $b=1/\epsilon$ we have that
the density at $z$ is proportional to $\exp(-4|z|)$ and for any $(z,z^{\prime
})$ such that $|z-z^{\prime}|\leq1$, $\Pr(z)$ and $\Pr(z^{^{\prime}})$ are
within a factor of $e^{\epsilon}$. The following proposition shows that it is
possible to achieve $\epsilon$-differential privacy for a given statistical
query class for suitable choice of the Laplacian parameter.

\begin{proposition}
[\cite{Dwo_NoiSen} ]\label{Dw_LMech}For any statistical query $f:D\rightarrow
\mathcal{R}$, the mechanism $L$ that adds independently generated noise to the
output terms with distribution $Lap(\left.  \Delta_{f}\right/  \epsilon$)
guarantees $\epsilon$-differential privacy where $\Delta_{f}=\max\left\vert
f(D)-f(D^{^{\prime}})\right\vert $ for $D$, $D^{^{\prime}}$ which are
different in exactly one row.
\end{proposition}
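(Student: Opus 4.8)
The plan is to reduce the $\epsilon$-differential privacy condition to a uniform pointwise bound on the ratio of output densities, and then integrate. First I would fix an arbitrary pair of neighbouring databases $D,D'$ (differing in exactly one row) and write the mechanism output as $L(D)=f(D)+Z$, where $Z$ has the Laplace density $p(z)\propto\exp(-\epsilon|z|/\Delta_f)$, added independently to each output term when $f$ is vector-valued. Since this density is everywhere positive, the induced density of $L(D)$ at a point $t$, namely $p_D(t)\propto\exp(-\epsilon|t-f(D)|/\Delta_f)$, never vanishes, so the likelihood ratio $p_D(t)/p_{D'}(t)$ is well defined for all $t$.

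The second step is the core estimate. Taking the quotient, the normalizing constants cancel and
\[
\frac{p_D(t)}{p_{D'}(t)} = \exp\!\left(\frac{\epsilon}{\Delta_f}\bigl(|t-f(D')| - |t-f(D)|\bigr)\right).
\]
The triangle inequality gives $|t-f(D')|-|t-f(D)|\le|f(D)-f(D')|$, and the definition of sensitivity bounds $|f(D)-f(D')|\le\Delta_f$ over every neighbouring pair, so the ratio is at most $\exp(\epsilon)$ for every $t$ and every such $D,D'$. For a multi-term query I would run the same argument coordinatewise; independence of the noise components multiplies the per-coordinate factors, turning the bound into $\exp\bigl((\epsilon/\Delta_f)\sum_i|f_i(D)-f_i(D')|\bigr)$, which is again at most $\exp(\epsilon)$ precisely because $\Delta_f$ is defined as the worst-case $\ell_1$ change induced by altering one row.

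Finally I would integrate the pointwise bound: for any measurable $S\subseteq\mathrm{Range}(L)$,
\[
\Pr[L(D)\in S] = \int_S p_D(t)\,dt \le \exp(\epsilon)\int_S p_{D'}(t)\,dt = \exp(\epsilon)\,\Pr[L(D')\in S],
\]
which is exactly the condition of Definition \ref{Def_DP}, so $L$ gives $\epsilon$-differential privacy. The argument is short and contains no real obstruction; the points that demand care are that the density-ratio bound must hold \emph{uniformly} over all neighbouring pairs and all outputs (differential privacy being a worst-case notion, and independent of the distribution on $D$), and, in the vector-valued case, correctly tracking how independence across coordinates converts the coordinatewise triangle inequalities into a single bound matched to the chosen Laplace scale $\Delta_f/\epsilon$.
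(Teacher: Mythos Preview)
Your argument is the standard, correct proof of the Laplace mechanism: bound the density ratio pointwise via the triangle inequality and the definition of $\Delta_f$, then integrate over $S$. Nothing is missing.

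As for comparison with the paper: there is no proof to compare against. The proposition is quoted as a result from \cite{Dwo_NoiSen}; the paper states it, attributes it, and immediately moves on to discussing its significance without supplying any argument. The only related computation the paper sketches (just before the proposition) is the informal remark that for $Z\sim Lap(1/\epsilon)$ and $|z-z'|\le 1$ the densities at $z$ and $z'$ are within a factor of $e^{\epsilon}$; your proof is precisely the rigorous version of that heuristic, specialized to the shift $z'-z=f(D)-f(D')$ and scaled by $\Delta_f$. So what you have written is not an alternative to the paper's proof but rather the proof the paper omits by citation.
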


Proposition \ref{Dw_LMech} is the most significant milestone in the theory of
privacy because it provides a method to guarantee a strong but quantifiable
notion of privacy for statistical databases independent of their content.
Furthermore, the noise distribution can be chosen after seeing the query, so
that the noise level can be adjusted adaptively when presented with a sequence
of queries. However, one constraint in using Proposition \ref{Dw_LMech} to
define $\epsilon$ is that $\Delta f$ may be difficult to estimate -- a loose
bound on $\Delta f$ \ may result in an overly large $\epsilon$, thereby
resulting in a possible degradation of utility.

To date, privacy has been the main focus of most work in this area. Indeed,
Dwork \cite{Dwork_DP} says explicitly that privacy is paramount in their work.
However, databases exist to be useful and implementing sanitization techniques
may hurt the usefulness of the database while safeguarding privacy. In much of
the earlier work on database privacy, the utility is implicit. For exmple,
Sweeney assumes that the databases can be $k$-anonymized and still maintain
usefulness. However, without a relationship between $k$ and some formal notion
of usefulness, it is impossible to say what a reasonable value of $k$\ should
be in reality. Similarly, utility in privacy-preserving techniques such as
clustering \cite{Chawla01} and histograms \cite{Chawla02} is assumed to be
guaranteed as a direct result of the methods used; for example, in
\cite{Chawla02} it is shown that approximation algorithms that can run on
original histograms can also run on the sanitized histograms with a
degradation of performance. Clustering, a common sanitization technique
\cite{Chawla01,Motwani1,L-diversity}, is claimed to maintain utility as a
result of the following property: all points in a cluster are mapped to the
cluster center, so no point is moved more than the diameter of the largest cluster.

The differential privacy model uses additive noise for sanitization which in
turn suggests a utility metric related to the accuracy of the sanitized
database. The Laplacian noise model was chosen for achieving differential
privacy in part because the mean and mode are zero, in which case no noise is
added in most cases. The privacy parameter $\epsilon$\ is inversely related to
the variance of the added noise -- a better privacy guarantee requires a
smaller $\epsilon$\ which in turn implies higher variance. The accuracy of a
sanitized database as a whole is inversely related to the privacy requirement.
Determining the appropriate range of $\epsilon$\ so that both privacy and
accuracy requirements are balanced requires knowledge of the specific
application. As an example, in the case of learning, recent results
\cite{Sarwate_C} in the area of \textit{private learning} bound the extent to
which the performance (i.e. accuracy)\ of certain kinds of classifers degrade
when the training data is sanitized using the $L$ mechanism in Proposition
\ref{Dw_LMech}. In such cases, it is possible to have both, differential
privacy with a known $\epsilon,$\ as well as quantified utility loss for the
application under consideration.

\subsection{Privacy vs. Secrecy}

It is important to contrast the privacy problem from the well-studied
(cryptographic and information-theoretic)\ \emph{secrecy} problem where the
task is to stop specific information from being received by untrusted third
parties (eavesdroppers, wire-tappers, and other kinds of adversaries). In the
\textit{private information retrieval} model \cite{Gasarch04asurvey}, the
privacy problem is inverted in that the adversary is the database from whom
the user wants to keep his \textit{query} secret. In the secure multi-party
computation model \cite{multiparty}, each player wishes to keep his
\textit{entire input} secret from the other players while jointly computing a
function on all the inputs. In all these problems, a specific data item is
clearly either secret or public, whereas in the privacy problem, the same data
while providing informational utility to the user can reveal private
information about the individuals represented by the data. This eliminates the
possibility of using secrecy techniques such as a specific model of the
adversary or of harnessing any computing \cite{Goldreich1} or physical
advantages such as secret keys, channel differences, or side information
\cite{Poor9}.

\section{\label{Sec_III}An Information-Theoretic Approach}

\subsection{\label{Sec_DB_Model}Model for Databases}

\textit{Circumventing the semantic issue}: In general, utility and privacy
metrics tend to be application specific. Focusing our efforts on developing an
analytical model, we propose to capture a canonical database model and
representative abstract metrics. Such a model will circumvent the classic
privacy issues related to the semantics of the data by assuming that there
exist forward and reverse maps of the data set to the proposed abstract format
(for e.g., a string of bits or a sequence of real values). Such mappings are
often implicitly assumed in the privacy literature
\cite{Chawla01,Ag_Ag,Dwork_DP}; our motivation for making it explicit is to
separate the semantic issues from the abstraction and apply Shannon-theoretic techniques.

\textit{Model}: Our proposed model focuses on large databases with $K$
attributes per entry. Let $X_{k}\in\mathcal{X}_{k}$ be a random variable
denoting the $k^{th}$ attribute, $k=1,2,\ldots,K,$ and let $\mathbf{X}%
\equiv\left(  X_{1},X_{2},\ldots,X_{K}\right)  $. A database $d$ with $n$ rows
is a sequence of $n$ independent observations of $\mathbf{X}$ from the
distribution%
\begin{equation}
p_{\mathbf{X}}\left(  \mathbf{x}\right)  =p_{X_{1}X_{2}\ldots X_{K}}\left(
x_{1},x_{2},\ldots,x_{K}\right)  \label{Prob_JointDist}%
\end{equation}
which is assumed to be known to both the designers and users of the database.
Our simplifying assumption of row independence holds generally (but not
always) as correlation is typically across attributes and not across entries.
We write $\mathbf{X}^{n}=\left(  X_{1}^{n},X_{2}^{n},\ldots,X_{K}^{n}\right)
$ to denote the $n$ independent observations of $\mathbf{X}$. This database
model is universal in the sense that most practical databases can be mapped to
this model.

A joint distribution in (\ref{Prob_JointDist}) models the fact that the
attributes in general are correlated and can reveal information about one
another. In addition to the revealed information, a user of a database can
have access to correlated side information from other information sources. We
model the side-information as an $n$-length sequence $Z^{n}$ which is
correlated with the database entries via a joint distribution $p_{\mathbf{X}%
Z}\left(  \mathbf{x,}z\right)  .$

\textit{Public and private variables}: We consider a general model in which
some attributes need to be kept private while the source can reveal a function
of some or all of the attributes. We write $\mathcal{K}_{r}$ and
$\mathcal{K}_{h}$ to denote sets of private (subscript $h$ for hidden) and
public (subscript $r$ for revealed) attributes, respectively, such that
$\mathcal{K}_{r}\cup\mathcal{K}_{h}=\mathcal{K\equiv}\left\{  1,2,\ldots
,K\right\}  $. We further denote the corresponding collections of public and
private attributes by $\mathbf{X}_{r}\equiv\left\{  X_{k}\right\}
_{k\in\mathcal{K}_{r}}$ and $\mathbf{X}_{h}\equiv\left\{  X_{k}\right\}
_{k\in\mathcal{K}_{h}}$, respectively. Our notation allows for an attribute to
be both public and private; this is to account for the fact that a database
may need to reveal a function of an attribute while keeping the attribute
itself private. In general, a database can choose to keep public (or private)
one or more attributes ($K>1)$. Irrespective of the number of private
attributes, a non-zero utility results only when the database reveals an
appropriate function of some or all of its attributes.

\textit{Special cases}: For $K=1$, the lone attribute of each entry (row) is
both public and private, and thus, we have $X\equiv X_{r}\equiv X_{h}$. Such a
model is appropriate for data mining \cite{Ag_Ag}; for a more general case in
which $K_{h}=K_{r}=K$, we obtain a model for census \cite{Dalenius,Chawla01}
data sets in which utility generally is achieved by revealing a function of
every entry of the database while simultaneously ensuring that no entry is
perfectly revealed. For $K=2$ and $\mathcal{K}_{h}\cup\mathcal{K}%
_{r}=\mathcal{K}$ and $\mathcal{K}_{h}\cap\mathcal{K}_{r}=\emptyset,$ we
obtain the Yamamoto model in \cite{Yamamoto}.

\subsection{Metrics: The Privacy and Utility Principle}

Even though utility and privacy measures tend to be specific to the
application, there is a fundamental principle that unifies all these measures
in the abstract domain. The aim of a privacy-preserving database is to provide
some measure of utility to the user while at the same time guaranteeing a
measure of privacy for the entries in the database.

A user perceives the utility of a perturbed database to be high as long as the
response is similar to the response of the unperturbed database; thus, the
utility is highest of an unperturbed database and goes to zero when the
perturbed database is completely unrelated to the original database.
Accordingly, our utility metric is an appropriately chosen average `distance'
function between the original and the perturbed databases. Privacy, on the
other hand, is maximized when the perturbed response is completely independent
of the data. Our privacy metric measures the difficulty of extracting any
private information from the response, i.e., the amount of uncertainty or
\textit{equivocation }about the private attributes given the response.

\subsection{\label{SS2}Utility-Privacy Tradeoffs}

\subsubsection{A Privacy-Utility Tradeoff Model}

We now propose a privacy-utility model for databases. \textit{Our primary
contribution is demonstrating the equivalence between the database privacy
problem and a source coding problem with additional privacy constraints}. A
primary motivation for our approach is the observation that database
sanitization is traditionally the process of distorting the data to achieve
some measure of privacy. For our abstract universal database model,
sanitization is thus a problem of mapping a set of database entries to a
different set subject to specific utility and privacy requirements.

Our notation below relies on this abstraction.\ Recall that a database $d$
with $n$ rows is an instantiation of $\mathbf{X}^{n}$. Thus, we will
henceforth refer to a real database $d$ as an \textit{input sequence} and to
the corresponding sanitized database (SDB) $d^{\prime}$ as an \textit{output
sequence}. When the user has access to side information, the
\textit{reconstructed sequence} at the user will in general be different from
the SDB sequence.

Our coding scheme consists of an encoder $F_{E}$ which is a mapping from the
set of all input sequences (i.e., all databases $d$ chosen from an underlying
distribution$)$ to a set of indices~$\mathcal{W}\equiv\left\{  1,2,\ldots
,M\right\}  $ and an associated table of output sequences (each of which is a
$d^{\prime})$ with a one-to-one mapping to the set of indices given by%
\begin{equation}
F_{E}:\left(  \mathcal{X}_{1}^{n}\times\mathcal{X}_{2}^{n}\times\ldots
\times\mathcal{X}_{k}^{n}\right)  _{k\in\mathcal{K}_{enc}}\rightarrow
\mathcal{W}\equiv\left\{  SDB_{k}\right\}  _{k=1}^{M} \label{F_Enc}%
\end{equation}
where $\mathcal{K}_{r}\subseteq\mathcal{K}_{enc}\subseteq\mathcal{K}$ and
$M=2^{nR}$ is the number of output (sanitized) sequences created from the set
of all input sequences. The encoding rate $R$ is the number of bits per row
(without loss of generality, we assume $n$ rows in $d$ and $d^{\prime}$) of
the sanitized database. The encoding $F_{E}$ in (\ref{F_Enc}) includes both
public and private attributes in order to model the general case in which the
sanitization depends on a subset of all attributes.

A user with a view of the SDB (i.e., an index $w\in\mathcal{W}$ for every $d)$
and with access to side information $Z^{n}$, whose entries $Z_{i}$,
$i=1,2,\ldots,n,$ take values in the alphabet $\mathcal{Z}$, reconstructs the
database $d^{\prime}$ via the mapping%
\begin{equation}
F_{D}:\mathcal{W}\times\mathcal{Z}^{n}\rightarrow\left\{  \mathbf{\hat{x}%
}_{r,m}^{n}\right\}  _{m=1}^{M}\in\left(
%TCIMACRO{\tprod \nolimits_{k\in\mathcal{K}_{r}}}%
%BeginExpansion
{\textstyle\prod\nolimits_{k\in\mathcal{K}_{r}}}
%EndExpansion
\mathcal{\hat{X}}_{k}^{n}\right)  \label{F_Dec}%
\end{equation}
where $\mathbf{\hat{X}}_{r}^{n}=F_{D}\left(  F_{E}\left(  \mathbf{X}%
^{n}\right)  \right)  $.

A database may need to satisfy multiple utility constraints for different
(disjoint) subsets of attributes, and thus, we consider a general framework
with $L\geq1$ utility functions that need to be satisfied. Relying on the
distance based utility principle, we model the $l^{th}$ utility,
$l=1,2,\ldots,L,$ via the requirement that the average \textit{distortion}
$\Delta_{l}$ of the revealed variables is upper bounded, for some $\epsilon
>0$, as%
\begin{multline}
u_{l}:\Delta_{l}\equiv\mathbb{E}\left[  \frac{1}{n}%
%TCIMACRO{\tsum _{i=1}^{n}}%
%BeginExpansion
{\textstyle\sum_{i=1}^{n}}
%EndExpansion
g\left(  \mathbf{X}_{r,i},\mathbf{\hat{X}}_{r,i}\right)  \right]  \leq
D_{l}+\epsilon\text{, }\label{Utility_mod}\\
l=1,2,\ldots,L,
\end{multline}
where $g\left(  \cdot,\cdot\right)  $ denotes a distortion function,
$\mathbb{E}$ is the expectation over the joint distribution of $(\mathbf{X}%
_{r},\mathbf{\hat{X}}_{r})$, and the subscript $i$ in $\mathbf{X}_{r,i}$ and
$\mathbf{\hat{X}}_{r,i}$ denotes the $i^{th}$ entry of $\mathbf{X}_{r}^{n}$
and $\mathbf{\hat{X}}_{r}^{n}$, respectively. Examples of distance-based
distortion functions include the Euclidean distance for Gaussian distributed
database entries, the Hamming distance for binary input and output sequences,
and the Kullback-Leibler (K-L) `distance' comparing the input and output distributions.

Having argued that a quantifiable uncertainty captures the underlying privacy
principle of a database, we model the uncertainty or equivocation about the
private variables using the entropy function as
\begin{equation}
p:\Delta_{p}\equiv\frac{1}{n}H\left(  \mathbf{X}_{h}^{n}|W,Z^{n}\right)  \geq
E-\epsilon, \label{Equivoc}%
\end{equation}
i.e., we require the average number of uncertain bits per entry to be lower
bounded by $E$. The case in which side information is not available at the
user is obtained by simply setting $Z^{n}=0$ in (\ref{F_Dec}) and
(\ref{Equivoc}).

The utility and privacy metrics in (\ref{Utility_mod}) and (\ref{Equivoc}),
respectively, capture two aspects of our universal model: a) both represent
averages by computing the metrics across all database instantiations $d$, and
b) the metrics bound the average distortion and privacy per entry. Thus, as
the likelihood of the non-typical sequences decreases exponentially with
increasing $n$ (very large databases), these guarantees apply nearly uniformly
to all (typical)\ entries. Our general model also encompasses the fact that
the exact mapping from the distortion and equivocation domains to the utility
and privacy domains, respectively, can depend on the application domain. We
write $D\equiv(D_{1},D_{2},\ldots,$ $D_{L})$ and $\Delta\equiv(\Delta
_{1},\Delta_{2},\ldots,\Delta_{L})$. Based on our notation thus far, we define
the utility-privacy tradeoff region as follows.

\begin{definition}
The utility-privacy tradeoff region $\mathcal{T}$ is the set of all feasible
utility-privacy tuples $(D,E)$ for which there exists a coding scheme $\left(
F_{E},F_{D}\right)  $ given by (\ref{F_Enc}) and (\ref{F_Dec}), respectively,
with parameters $(n,M,\Delta,\Delta_{p})$ satisfying the constraints in
(\ref{Utility_mod}) and (\ref{Equivoc}).
\end{definition}

\subsubsection{\label{Sec_RDE}Equivalence of Utility-Privacy and
Rate-Distortion-Equivocation}

We now present an argument for the equivalence of the above utility-privacy
tradeoff analysis with a rate-distortion-equivocation analysis of the same
source. For the database source model described here, a classic lossy source
coding problem is defined as follows.

\begin{definition}
The set of tuples $(R,D)$ is said to be feasible (achievable) if there exists
a coding scheme given by (\ref{F_Enc}) and (\ref{F_Dec}) with parameters
$(n,M,\Delta)$ satisfying the constraints in (\ref{Utility_mod}) and a rate
constraint
\begin{equation}
M\leq2^{n\left(  R+\epsilon\right)  }. \label{Rate_constraint}%
\end{equation}

\end{definition}

When an additional privacy constraint in (\ref{Equivoc}) is included, the
source coding problem becomes one of determining the achievable
rate-distortion-equivocation region defined as follows.

\begin{definition}
\label{Def_RDE}The rate-distortion-equivocation region $\mathcal{R}$ is the
set of all tuples $(R,D,E)$ for which there exists a coding scheme given by
(\ref{F_Enc}) and (\ref{F_Dec}) with parameters $(n,M,\Delta,\Delta_{p})$
satisfying the constraints in (\ref{Utility_mod}), (\ref{Equivoc}), and
(\ref{Rate_constraint}). The set of all feasible distortion-equivocation
tuples $\left(  D,E\right)  $ is denoted by $\mathcal{R}_{D-E}$, the
equivocation-distortion function in the $D$-$E$ plane is denoted by
$\Gamma(D)$, and the distortion-equivocation function which quantifies the
rate as a function of both $D$ and $E$ is denoted by $R\left(  D,E\right)  $.
\end{definition}

Thus, a rate-distortion-equivocation code is by definition a (lossy) source
code satisfying a set of distortion constraints that achieves a specific
privacy level for every choice of the distortion tuple. In the following
theorem, we present a basic result capturing the precise relationship between
$\mathcal{T}$ and $\mathcal{R}$. To the best of our knowledge, this is the
first analytical result that quantifies a tight relationship between utility
and privacy. We briefly sketch the proof here; details can be found in
\cite{LS_VP}.

\begin{theorem}
\label{Lemma_equiv}For a database with a set of utility and privacy metrics,
the tightest utility-privacy tradeoff region $\mathcal{T}$ is the
distortion-equivocation region $\mathcal{R}_{D-E}$.
\end{theorem}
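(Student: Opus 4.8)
The plan is to establish the two inclusions $\mathcal{R}_{D-E}\subseteq\mathcal{T}$ and $\mathcal{T}\subseteq\mathcal{R}_{D-E}$ separately, exploiting the fact that both regions are built from the \emph{same} family of codes $(F_E,F_D)$ of (\ref{F_Enc})--(\ref{F_Dec}) and the \emph{same} utility and privacy requirements (\ref{Utility_mod}) and (\ref{Equivoc}); the only formal asymmetry is that membership in $\mathcal{R}_{D-E}$ additionally demands a rate $R$ obeying (\ref{Rate_constraint}). The first observation I would record is that this extra demand is never actually binding: every admissible code already comes with a codebook of $M=2^{nR}$ output sequences, hence carries an intrinsic rate $R=\tfrac1n\log_2 M$ for which (\ref{Rate_constraint}) holds trivially.

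\emph{Achievability} ($\mathcal{R}_{D-E}\subseteq\mathcal{T}$): Fix $(D,E)\in\mathcal{R}_{D-E}$. By Definition~\ref{Def_RDE} there is some $R$ with $(R,D,E)\in\mathcal{R}$, i.e.\ a code with parameters $(n,M,\Delta,\Delta_p)$ meeting (\ref{Utility_mod}), (\ref{Equivoc}) \emph{and} (\ref{Rate_constraint}). Discarding the rate constraint, this same code witnesses $(D,E)\in\mathcal{T}$, so $\mathcal{R}_{D-E}$ is an inner bound for $\mathcal{T}$. \emph{Converse} ($\mathcal{T}\subseteq\mathcal{R}_{D-E}$): Fix $(D,E)\in\mathcal{T}$ realized by a code $(F_E,F_D)$ with parameters $(n,M,\Delta,\Delta_p)$, and set $R:=\tfrac1n\log_2 M$, so that $M=2^{nR}\le 2^{n(R+\epsilon)}$; since (\ref{Utility_mod}) and (\ref{Equivoc}) hold by hypothesis, the triple $(R,D,E)$ lies in $\mathcal{R}$, whence $(D,E)\in\mathcal{R}_{D-E}$. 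Combining the two inclusions gives $\mathcal{T}=\mathcal{R}_{D-E}$.

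The step that carries the conceptual weight is not either inclusion — both are essentially bookkeeping — but the accompanying quantitative remark that \emph{minimizing the disclosure rate is what maximizes privacy}. Writing $H(\mathbf{X}_h^n\mid W,Z^n)=H(\mathbf{X}_h^n\mid Z^n)-I(\mathbf{X}_h^n;W\mid Z^n)$ and bounding $I(\mathbf{X}_h^n;W\mid Z^n)\le H(W)\le nR$, one sees that the equivocation in (\ref{Equivoc}) is governed by the rate, so a minimum-rate (i.e.\ $(R,D,E)$-optimal) code is simultaneously privacy-optimal; this is what lets me identify the frontier of $\mathcal{R}_{D-E}$, the equivocation--distortion function $\Gamma(D)$, as the tight privacy--utility curve. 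I expect the only genuine care to be in the specialization step afterwards: matching the present source model to the instances of Section~\ref{Sec_DB_Model} so that $\Gamma(D)$ (and $R(D,E)$) can be read off from Yamamoto's rate--distortion--equivocation characterization in \cite{Yamamoto}, and checking that the side-information-free case ($Z^n=0$) and the use of a common $\epsilon$ and block length $n$ on both sides of each inclusion are handled uniformly.
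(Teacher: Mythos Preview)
Your proposal is correct and follows essentially the same line as the paper's (very brief) proof: both rest on the observation that the rate constraint (\ref{Rate_constraint}) is never truly an extra restriction, since any code $(F_E,F_D)$ already carries an intrinsic rate $R=\tfrac{1}{n}\log_2 M$, so the projection of $\mathcal{R}$ onto $(D,E)$ coincides with $\mathcal{T}$. Your two-inclusion bookkeeping makes this explicit, whereas the paper states only the conceptual crux --- that for each feasible $D$ the minimum-rate code maximizes equivocation --- and defers details to \cite{LS_VP}; your added entropy identity $H(\mathbf{X}_h^n\mid W,Z^n)=H(\mathbf{X}_h^n\mid Z^n)-I(\mathbf{X}_h^n;W\mid Z^n)\ge H(\mathbf{X}_h^n\mid Z^n)-nR$ is exactly the quantitative form of that crux.
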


\begin{proof}
The crux of our argument is the fact that for any feasible utility level $D$,
choosing the minimum rate $R\left(  D,E\right)  $, ensures that the least
amount of \textit{information} is revealed about the source via the
reconstructed variables. This in turn ensures that the maximum privacy of the
private attributes is achieved for that utility since, in general, the public
and private variables are correlated. For the same set of utility constraints,
since such a rate requirement is not a part of the utility-privacy model, the
resulting privacy achieved is at most as large as that in $\mathcal{R}_{D-E}$
(see Fig. \ref{Fig_RDE_UP}(a)).
\end{proof}

Implicit in the above argument is the fact that a utility-privacy achieving
code does not perform any better than a rate-distortion-equivocation code in
terms of achieving a lower rate (given by $\log_{2}M/n)$ for the same
distortion and privacy constraints. We can show this by arguing that if such a
code exists then we can always find an equivalent source coding problem for
which the code would violate Shannon's source coding theorem \cite{Shannon_SC}%
. An immediate consequence of this is that a distortion-constrained source
code suffices to preserve a desired level of privacy; in other words,
\textit{the utility constraints require revealing data which in turn comes at
a certain privacy cost that must be borne and vice-versa}. We capture this
observation in Fig. \ref{Fig_RDE_UP}(b) where we contrast existing
privacy-exclusive and utility-exclusive regimes (extreme points of the
utility-privacy tradeoff curve) with our more general approach of determining
the set of feasible utility-privacy tradeoff points.%

%TCIMACRO{\TeXButton{B}{\begin{figure*}[tbp] \centering}}%
%BeginExpansion
\begin{figure*}[tbp] \centering
%EndExpansion%
%TCIMACRO{\FRAME{itbpF}{5.4794in}{2.8219in}{0in}{}{}{rde_up_figures.eps}%
%{\special{ language "Scientific Word";  type "GRAPHIC";  display "USEDEF";
%valid_file "F";  width 5.4794in;  height 2.8219in;  depth 0in;
%original-width 17.1977in;  original-height 7.478in;  cropleft "0";
%croptop "1";  cropright "1";  cropbottom "0";
%filename '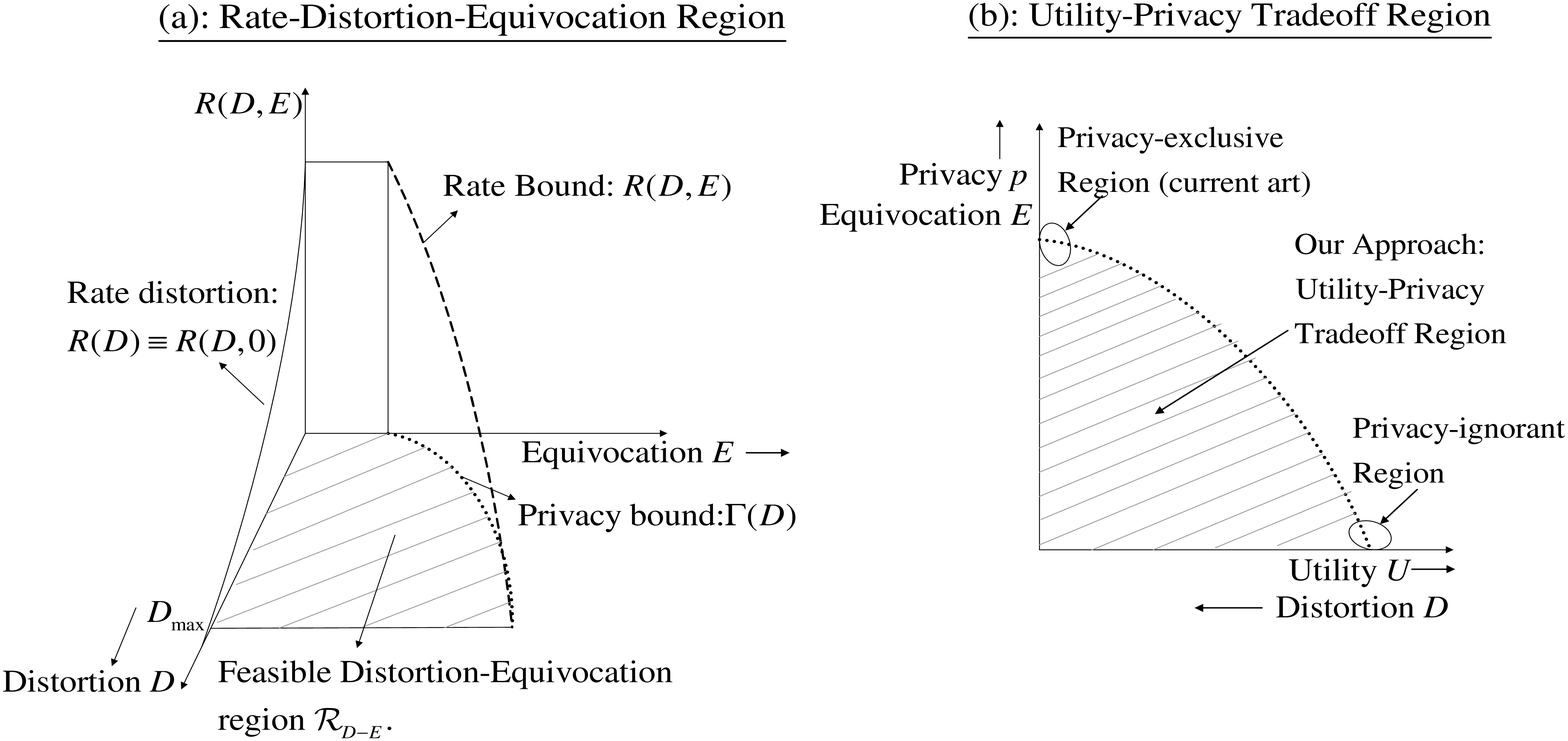';file-properties "XNPEU";}}}%
%BeginExpansion
{\includegraphics[
height=2.8219in,
width=5.4794in
]%
{RDE_UP_figures.eps}%
}%
%EndExpansion
\caption{(a) Rate Distortion Equivocation Region; (b) Utility-Privacy Tradeoff Region.}\label{Fig_RDE_UP}%
%TCIMACRO{\TeXButton{E}{\end{figure*}}}%
%BeginExpansion
\end{figure*}%
%EndExpansion

From an information-theoretic perspective, the power of Theorem
\ref{Lemma_equiv} is that it allows us to study the larger problem of database
utility-privacy tradeoffs in terms of a relatively familiar problem of source
coding with privacy constraints. As noted previously, this problem has been
studied for a specific source model by Yamamoto and here we expand his elegant
analysis to arbitrary database models including those with side information at
the user. Rate for the database can be interpreted as the number of revealed
information bits (precision) per row. Our result shows the tight relationship
between utility, privacy, and precision --\ fixing the value of any one
determines the other two; for example, fixing the utility (distortion $D$)
precisely quantifies the maximal privacy $\Gamma(D)$ and the minimal precision
$R(D,E)$ for any $E$ bounded by $\Gamma(D)$.

\subsubsection{\label{Sec_SI}Capturing the Effects of Side-Information}

It has been illustrated that when a user has access to an external data source
(which is not part of the database under consideration) the level of privacy
that can be guaranteed changes \cite{Sweeney,Dwork_DP}. We cast this problem
in information-theoretic terms as a side information problem.

In an extended version \cite{LS_VP} of this work, we develop the tightest
utility-privacy tradeoff region for the three cases of a) no side information
($L=1$ case studied in \cite{Yamamoto}), b) side information only at the user,
and c) side information at both the source (database) and the user. We present
a result for the case with side information at the user only and for
simplicity, we assume a single utility function, i.e., $L=1$. The proof uses
an auxiliary random variable $U$ along the lines of source coding with side
information \cite{Wyner_Ziv} and bounds the equivocation just as in
\cite[Appendix 1]{Yamamoto}. The following theorem defines the bounds on the
region $\mathcal{R}$ in Definition \ref{Def_RDE} via the functions $\Gamma(D)$
and $R(D,E)$ where $\Gamma(D)$ bounds the maximal achievable privacy and
$R(D,E)$ is the minimal information rate (see Fig. \ref{Fig_RDE_UP}(a)) for
very large databases $\left(  n\rightarrow\infty\right)  $. The proof follows
along the lines of Yamamoto's proof in \cite[Appendix 1]{Yamamoto} and is
skipped in the interest of space.

\begin{theorem}
\label{Pro_Prop1}For a database with side information available only at the
user, the functions $\Gamma(D)$ and $R\left(  D,E\right)  $ and the regions
$\mathcal{R}_{D-E}$ and $\mathcal{R}$ are given by%
\begin{align}
\Gamma\left(  D\right)   &  =\sup_{p\left(  \mathbf{x}_{r},\mathbf{x}%
_{h}\right)  p\left(  u|\mathbf{x}_{r},\mathbf{x}_{h}\right)  \in
\mathcal{P}\left(  D\right)  }H(\mathbf{X}_{h}|UZ)\label{TD_SIu}\\
R\left(  D,E\right)   &  =\inf_{p\left(  \mathbf{x}_{r},\mathbf{x}_{h}\right)
p\left(  u|\mathbf{x}_{r},\mathbf{x}_{h}\right)  \in\mathcal{P}\left(
D,E\right)  }I(\mathbf{X}_{h}\mathbf{X}_{r};U)-I(Z;U) \label{RDE_SIu}%
\end{align}%
\begin{equation}
\mathcal{R}_{D-E}=\left\{  \left(  D,E\right)  :D\geq0,0\leq E\leq
\Gamma\left(  D\right)  \right\}  \label{RDEreg_SIu}%
\end{equation}%
\begin{equation}
\mathcal{R}=\left\{  \left(  R,D,E\right)  :D\geq0,0\leq E\leq\Gamma\left(
D\right)  ,R\geq R\left(  D,E\right)  \right\}  \label{Rreg_SIu}%
\end{equation}
where $\mathcal{P}\left(  D,E\right)  $ is the set of all $p(\mathbf{x}%
_{r},\mathbf{x}_{h},z)p(u|\mathbf{x}_{r},\mathbf{x}_{h})$ such that
$\mathbb{E}\left[  d\left(  \mathbf{X}_{r},g\left(  U,Z\right)  \right)
\right]  \leq D$ and $H(\mathbf{X}_{h}|UZ)\geq E,$ while $\mathcal{P}\left(
D\right)  $ is defined as%
\begin{equation}
\mathcal{P}\left(  D\right)  \equiv%
%TCIMACRO{\tbigcup _{H(\mathbf{X}_{h}|\mathbf{X}_{r}Z)\leq E\leq H(\mathbf{X}%
%_{h}|Z)}}%
%BeginExpansion
{\textstyle\bigcup_{H(\mathbf{X}_{h}|\mathbf{X}_{r}Z)\leq E\leq H(\mathbf{X}%
_{h}|Z)}}
%EndExpansion
\mathcal{P}\left(  D,E\right)  .
\end{equation}

\end{theorem}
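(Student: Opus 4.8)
The plan is to establish the two functions $\Gamma(D)$ and $R(D,E)$ via matching achievability and converse arguments, treating the problem as a Wyner--Ziv-type source coding problem in which the reconstruction target is $\mathbf{X}_r$, the side information at the user is $Z$, and the privacy figure of merit is the equivocation $H(\mathbf{X}_h\mid U,Z)$ of the hidden attributes seen through the auxiliary variable $U$ and the side information. First I would set up the single-letterization: since the rows $\mathbf{X}^n$ are i.i.d.\ and $Z^n$ is i.i.d.\ correlated with $\mathbf{X}^n$ per-row, a standard identification of the auxiliary variable $U_i = (W, Z^{i-1}, \mathbf{X}_{r}^{i-1}, \ldots)$ (the usual choice from the Wyner--Ziv converse, possibly augmented so the private attributes enter only through $p(u\mid \mathbf{x}_r,\mathbf{x}_h)$ as the Markov structure $U - (\mathbf{X}_r,\mathbf{X}_h) - Z$ demands) lets me rewrite the rate bound $\tfrac1n\log M \ge \tfrac1n I(\mathbf{X}^n;W\mid Z^n)$ and peel it into a single-letter sum $\ge I(\mathbf{X}_r\mathbf{X}_h;U) - I(Z;U)$, after invoking the Csisz\'ar sum identity to handle the $-I(Z;U)$ term exactly as in the classical Wyner--Ziv proof. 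For the equivocation I would expand $H(\mathbf{X}_h^n\mid W,Z^n)$ row-by-row and bound it below by $\sum_i H(\mathbf{X}_{h,i}\mid U_i, Z_i)$, again using the chain rule and the fact that conditioning reduces entropy; this is the step that mirrors \cite[Appendix 1]{Yamamoto}.

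Next I would handle achievability. The plan is a Wyner--Ziv random-binning scheme: fix a test channel $p(u\mid \mathbf{x}_r,\mathbf{x}_h)$ and a reconstruction function $g(u,z)$ meeting the distortion constraint $\mathbb{E}[d(\mathbf{X}_r,g(U,Z))]\le D$; generate $2^{nI(\mathbf{X}_r\mathbf{X}_h;U)}$ codewords $U^n$, bin them into $2^{n(I(\mathbf{X}_r\mathbf{X}_h;U)-I(Z;U)+\epsilon)}$ bins, have the encoder send the bin index of a jointly-typical $U^n$, and let the decoder pick the unique codeword in that bin jointly typical with $Z^n$. Standard covering and packing arguments (Markov lemma for the joint typicality with $Z$) give vanishing error and the distortion guarantee. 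For the equivocation I then need to show $\tfrac1n H(\mathbf{X}_h^n\mid W,Z^n) \ge H(\mathbf{X}_h\mid U,Z) - \epsilon$; this follows because, conditioned on the bin index and $Z^n$, the decoder recovers $U^n$, and $H(\mathbf{X}_h^n\mid W,Z^n) \ge H(\mathbf{X}_h^n\mid U^n,Z^n) - n\epsilon_n \to n H(\mathbf{X}_h\mid U,Z)$ by the i.i.d.\ structure and Fano-type slack. Sweeping over all admissible $(p(u\mid \cdot), g)$ and using a time-sharing/convexity argument to close any gap then yields $R(D,E)$ as the infimum over $\mathcal{P}(D,E)$ and $\Gamma(D)$ as the supremum over $\mathcal{P}(D)$; the definition of $\mathcal{P}(D)$ as the union of $\mathcal{P}(D,E)$ over $H(\mathbf{X}_h\mid\mathbf{X}_r Z)\le E\le H(\mathbf{X}_h\mid Z)$ just records the obvious range of achievable equivocations (lower end: $U$ essentially determines $\mathbf{X}_r$; upper end: $U$ independent of the source).

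The main obstacle I expect is the joint treatment of the equivocation and the rate in the converse: one must choose a \emph{single} auxiliary $U_i$ that simultaneously tightens the rate bound to $I(\mathbf{X}_r\mathbf{X}_h;U)-I(Z;U)$ \emph{and} upper-bounds the per-letter equivocation by $H(\mathbf{X}_{h,i}\mid U_i,Z_i)$, while respecting the Markov constraint $U-(\mathbf{X}_r,\mathbf{X}_h)-Z$ that is baked into the definition of $\mathcal{P}(D,E)$. Getting the Csisz\'ar-sum bookkeeping to land on exactly the $I(Z;U)$ correction term, and verifying that the cardinality of $U$ can be bounded (via the support lemma) so that the sup/inf are attained, are the delicate points; everything else is a direct adaptation of Wyner--Ziv \cite{Wyner_Ziv} and Yamamoto \cite[Appendix 1]{Yamamoto}. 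I would also need to check the boundary cases in \eqref{RDEreg_SIu}--\eqref{Rreg_SIu} — that $E$ ranges over $[0,\Gamma(D)]$ and that $R(D,E)$ is well-defined and monotone on that range — which follows from the continuity and convexity of the single-letter expressions in the test-channel parameters.
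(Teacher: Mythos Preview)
Your proposal is correct and matches the paper's intended approach exactly: the paper does not give a proof at all, stating only that it ``follows along the lines of Yamamoto's proof in \cite[Appendix~1]{Yamamoto} and is skipped in the interest of space,'' with the auxiliary variable $U$ introduced ``along the lines of source coding with side information \cite{Wyner_Ziv}.'' Your Wyner--Ziv binning for achievability and the Csisz\'ar-sum/single-letterization converse with the per-letter equivocation bound are precisely that combination; the one place to tidy is the achievability equivocation step, where the cleaner chain is $H(\mathbf{X}_h^n\mid W,Z^n)\ge H(\mathbf{X}_h^n\mid W,Z^n,U^n)=H(\mathbf{X}_h^n\mid U^n,Z^n)-o(n)$ (since $W$ is a function of $U^n$ and $U^n$ is recoverable from $(W,Z^n)$ up to a Fano term), after which the i.i.d.\ evaluation gives $nH(\mathbf{X}_h\mid U,Z)$.
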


While Theorem \ref{Pro_Prop1} applies to a variety of database models, it is
extremely useful in quantifying the utility-privacy tradeoff for the following
special cases of interest.

i) \textit{The single database problem} (i.e., no side information):
\textit{SDB\ is revealed}. Here, we have $Z=0$ and $U=\hat{X}_{r}$, i.e., the
reconstructed vectors seen by the user are the same as the SDB vectors.

ii) \textit{Completely hidden private variables}: \textit{Privacy is
completely a function of the statistical relationship between public, private,
and side information data}. The expression for $R(D,E)$ in (\ref{RDE_SIu})
assumes the most general model of encoding both the private and the public
variables. When the private variables can only be deduced from the revealed
variables, i.e., $\mathbf{X}_{h}-\mathbf{X}_{r}-U$ is a Markov chain, the
expression for $R(D,E)$ in (\ref{RDE_SIu}) will simplify to the Wyner-Ziv
source coding formulation \cite{Wyner_Ziv}, thus clearly demonstrating that
the privacy of the hidden variables is a function of both the correlation
between the hidden and revealed variables and the distortion constraint.

iii)\ \textit{Census and data mining problems without side information}:
\textit{Information rate completely determines the degree of privacy
achievable}. For $Z=0$, setting $\mathbf{X}_{r}=\mathbf{X}_{h}\equiv
\mathbf{X}$ (such that $U=\mathbf{\hat{X}}$), we obtain the census/data mining
problem discussed earlier. In general, due to an additional equivocation
constraint, $R(D,E)\geq R(D)$; however, for this case in which all the
attributes in the database are public, since $\Gamma(D)=H(\mathbf{X}%
)-R(D,E)\leq H(\mathbf{X})-R(D)$, and $R(D)$ is achievable using a
rate-distortion code, the largest possible equivocation is also achievable.
Our analysis thus formalizes the intuition in \cite{Ag_Ag} for using the
mutual information as an estimate of the privacy lost. However in contrast to
\cite{Ag_Ag} in which the underlying perturbation model is an additive noise
model, we assume a perturbation model most appropriate for the input
statistics, i.e., the stochastic relationship between the output and input
variables is chosen to minimize the rate of information transfer. 

\section{\label{Sec_IV}Illustration of Results}

We illustrate our results for two types of databases: one, a
\textit{categorical} database and the other a \textit{numerical} database.
Categorical data are typically discrete data sets comprising of information
such as gender, social security numbers and zipcodes that provide (meaningful)
utility only if they are mapped within their own set. On the other hand,
without loss of generality numeric data can be assumed to belong to the set of
real numbers. In general, a database will have a mixture of categorical and
numerical attributes but for the purpose of illustration, we assume that the
database is of one type or the other, i.e., every attribute is of the same
kind. In both cases, we assume a single utility (distortion) function. We
discuss each example in detail below.

\begin{example}
Consider a categorical database with $K\geq1$ attributes. In general, the
$k^{th}$ attribute $X_{k}$ takes values in a discrete set $\mathcal{X}_{k}$ of
cardinality $M_{k}$. For our example, we model the utility as a single
distortion function of all attributes, and therefore, it suffices to view each
entry (a row of all $K$ attributes) of the database as generated from a single
source $X$ of cardinality $M$, i.e., $X\sim p(x),$ $x\in\left\{
1,2,\ldots,M\right\}  $. For this arbitrary discrete source model, we assume
that the output sample space $\mathcal{\hat{X}=X}$ and consider the
generalized Hamming distortion as the utility function such that the average
distortion $D$ is given by
\begin{equation}
D=E\left[  d(X,\hat{X})\right]  =\Pr\left\{  X\not =\hat{X}\right\}
.\label{CatDB_Dist}%
\end{equation}
For $K=1,$ one can show that $R(D,E)\equiv R(D)$ \cite{LS_VP}; this is because
the maximum achievable equivocation is bounded as $\Gamma(D)=H(X)-R(D,E)\leq
H(X)-R(D)$ with equality when $R(D)$ is achievable. It has been shown by
Erokhin \cite{Erokhin} and Pinkston \cite{Pinkston} that $R(D)$ is achieved by
upside down waterfilling such that
\begin{equation}
p(\hat{x})=\frac{\left(  p(x)-\lambda\right)  ^{+}}{\sum_{x\in\mathcal{X}%
}\left(  p(x)-\lambda\right)  ^{+}}\label{DBCat_pxhat}%
\end{equation}
and the `test channel' is given by%
\begin{equation}
p(x|\hat{x})=\left\{
\begin{array}
[c]{ll}%
\overline{D}, & x=\hat{x}\\
\lambda, & x\not =\hat{x},x\in\mathcal{\hat{X}}_{\text{supp}}\\
p_{k}, & x=k\not \in \mathcal{\hat{X}}_{\text{supp}}%
\end{array}
\right.  \label{DBCat_pxhatx}%
\end{equation}
where $\overline{D}=1-D$, $\lambda$ is chosen such that $\sum_{\hat{x}}%
p(\hat{x})p(x|\hat{x})=p(x)$, $p_{k}=p\left(  x=k\right)  $, and
$\mathcal{\hat{X}}_{\text{supp}}=\left\{  x:p(x)-\lambda>0\right\}  .$ The
maximum achievable equivocation, and hence, the largest utility-privacy
tradeoff region is%
\begin{equation}
\Gamma(D)=-\overline{D}\log\overline{D}-\left\vert \mathcal{\hat{X}%
}_{\text{supp}}\right\vert \lambda\log\lambda-\sum_{k\not \in \mathcal{\hat
{X}}_{\text{supp}}}p_{k}\log p_{k}.
\end{equation}

\end{example}

\begin{remark}
The distortion function chosen in (\ref{CatDB_Dist}) captures the fact that
for categorical data the utility (fidelity) of the revealed data is reduced if
any entry is changed from its original value. The optimal upside down
waterfilling solution in (\ref{DBCat_pxhat}) has the effect of `flattening'
the output distribution, and thus, as in (\ref{DBCat_pxhat}) the source
samples with very high or very low probabilities (relative to the waterfilling
level) are ignored (thereby minimizing the information transfer rate). This in
turn maximizes the privacy achieved since the outliers that are easiest to
infer are eliminated. Eliminating outliers, referred to as information
suppression or aggregation, is the privacy-preserving technique of choice for
the statistics community .
\end{remark}

\begin{example}
In this example we model a numerical database. We consider a $K=2$ database
where both attributes $X$ and $Y$ are jointly Gaussian with zero means and
variances $\sigma_{X}^{2}$ and $\sigma_{Y}^{2}$, respectively, and with
correlation coefficient $\rho=E\left[  XY\right]  /\left(  \sigma_{X}%
\sigma_{Y}\right)  $. This model applies for numeric data such as height and
weight measures which are generally assumed to be normally distributed. We
assume that for every entry only one of the two attributes, say $X$, is
revealed while the other, say $Y$, is hidden such that $Y-X-\hat{X}$ forms a
Markov chain. The rate-distortion-equivocation region for this case can be
obtained directly from Yamamoto's results \cite{Yamamoto} with appropriate
substitution for a jointly Gaussian source. Furthermore, due to the Markov
relationship between of $X,Y,$ and $\hat{X}$, the minimization of $I(X;\hat
{X})$ is strictly over $p(\hat{x}|x),$ and thus, simplifies to the familiar
rate-distortion problem for a Gaussian source $X$ which in turn is achieved by
choosing the reverse channel from $\hat{X}$ to $X$ as an additive white
Gaussian noise channel with variance $D$ (average distortion). The maximal
equivocation achieved thus is
\begin{equation}%
\begin{array}
[c]{cc}%
\Gamma(D)=\sigma_{Y}^{2}\left[  \left(  1-\rho^{2}\right)  +\rho^{2}D\left/
\sigma_{X}^{2}\right.  \right]  , & D\leq\sigma_{X}^{2}.
\end{array}
\end{equation}
Therefore, $\Gamma(D)$ is a minimum for $D=0$ ($X$ revealed perfectly) in
which case only the data independent of $X$ in $Y$ can be private, and is a
maximum equal to the entropy of $Y$ at the maximum distortion $D=\sigma
_{X}^{2}$. Thus, the largest utility-privacy tradeoff region is simply the
region enclosed by $\Gamma(D).$
\end{example}

\section{\label{Sec_V}Concluding Remarks}

We have presented an abstract model for databases with an arbitrary number of
public and private variables, developed application-independent privacy and
utility metrics, and used rate distortion theory to determine the fundamental
utility-privacy tradeoff limits. Future work includes eliminating the row
independence (i.i.d) assumption, modeling and studying tradeoffs for multiple
query databases, and relating current approaches in computer science and our
universal approach. An equally pertinent question is to understand whether our
formalism can be extended to study privacy-utility tradeoffs for less
structured datasets as well as social networks.

\bibliographystyle{IEEEtran}
\bibliography{DB_refs}

\end{document}